\colorlet{shadecolor}{yellow}
\newcommand{\f}{{\mathsf{f}}}
\newcommand{\cJ}{\mathcal{J}}
\newcommand{\cR}{\mathfrak{R}}
\newcommand{\cS}{\mathcal{S}}
\newcommand{\cT}{\mathcal{T}}
\newcommand{\cU}{\mathcal{U}}
\newtheorem{problem}{Problem}
\newtheorem{proposition}{Proposition}
\newenvironment{proof}{\vspace*{-20pt}\paragraph*{Proof:}}{\hfill$\square$}
\begin{document}
\begin{frontmatter}

\title{Vector Field-based Collision Avoidance for Moving Obstacles with Time-Varying Elliptical Shape \thanksref{footnoteinfo}}

\thanks[footnoteinfo]{This research has been supported in part by NSF under ECCS-1924790. }

\author[First]{Martin Braquet}~~~~ 
\author[First]{Efstathios Bakolas} 

\address[First]{Department of Aerospace Engineering and Engineering Mechanics, The University of Texas at Austin, Austin, Texas 78712-1221, USA (e-mail: braquet@utexas.edu, bakolas@austin.utexas.edu).}

\begin{abstract}
This paper presents an algorithm for local motion planning in environments populated by moving elliptical obstacles whose velocity, shape and size are fully known but may change with time. We base the algorithm on a collision avoidance vector field (CAVF) that aims to steer an agent to a desired final state whose motion is described by a double integrator kinematic model. In addition to handling multiple obstacles, the method is applicable in bounded environments for more realistic applications (e.g., motion planning inside a building). We also incorporate a method to deal with agents whose control input is limited so that they safely navigate around the obstacles. To showcase our approach, extensive simulations results are presented in 2D and 3D scenarios.
\end{abstract}

\begin{keyword}
Motion Planning, Obstacle Avoidance, Moving Obstacles.
\end{keyword}

\end{frontmatter}

\maketitle

% The paper headers
%\markboth{IEEE TRANSACTIONS ON MICROWAVE THEORY AND TECHNIQUES, VOL.~60, NO.~12, DECEMBER~2012}{Roberg \MakeLowercase{\textit{et al.}}: High-Efficiency Diode and Transistor Rectifiers}

% For peer review papers, you can put extra information on the cover
% page as needed:
% \ifCLASSOPTIONpeerreview
% \begin{center} \bfseries EDICS Category: 3-BBND \end{center}
% \fi
%
% For peerreview papers, this IEEEtran command inserts a page break and
% creates the second title. It will be ignored for other modes.
%\IEEEpeerreviewmaketitle

% ====================================================================
% ====================================================================
% ====================================================================

% === I. INTRODUCTION =============================================================
% =================================================================================
\section{Introduction}

Collision avoidance constitutes a fundamental problem in robotics.
For example, unmanned aerial vehicles (UAVs) have found many applications related to search and rescue, payload and package delivery, and surveillance, to name but a few. 
%In the multi-agent settings, the agents need to first solve a task allocation problem before considering the motion planing and its underlying obstacle avoidance.
To accomplish their mission, robots have to plan their path in environments that are often populated by obstacles. Such obstacles are not precisely known or can be mobile and hence are often characterized in probabilistic ways (e.g., a confidence or probability ellipsoid). Due to the motion and rotation of these obstacles, as well as the dynamic information gathered by the robot, the probability density of the obstacles (i.e., their shape) may vary with time. Additionally, they are required to react in real time to pop-up tasks and addition or removal of some other agents. Therefore, a well-designed algorithm for such problems needs to be decentralized, reactive, computationally inexpensive, and handle fast collision avoidance.

\textit{Literature review:} 
Some of the earliest and most notable algorithms tackling obstacle avoidance problems consist of creating a graph by discretizing the autonomous agent’s free configuration space, considering only the geometric requirement of finding an obstacle-free path between two points. Some of the most popular graph-based search algorithms are Dijkstra’s (\cite{dijkstra1959note}) and
A* (\cite{hart1968formal,yang2004trajectory}), which require a configuration of the space graph using basic or more advanced sampling techniques such as Voronoi diagrams and projections (\cite{huttenlocher1993upper, aurenhammer1991voronoi}). However, discrete methods rely on search-space granularity (i.e., searching a complex discrete space) to compute the optimal solution, and therefore, are not very suitable for real-time applications.

Many relevant algorithms have been developed for motion planning in dynamic environments using \textit{velocity obstacles}, which aim at selecting avoidance maneuvers to avoid obstacles in the velocity space (\cite{fiorini1998motion}). These algorithms are generated by selecting robot velocities outside of the velocity obstacles, which represent the set of robot velocities that would result in a collision.
Optimal reciprocal collision avoidance (ORCA) is an extension of the velocity obstacle algorithm to include multiple agents by resolving the common oscillation problem in multi-agent navigation (\cite{van2008reciprocal, van2011reciprocal, snape2011hybrid}). 
\begin{comment}
ORCA takes into account the reactive behavior of the other agents by implicitly assuming that they make a similar collision-avoidance reasoning. The agent chooses a velocity that lies outside any of the velocity obstacles induced by the moving obstacles. If among the free velocities, the velocity chosen is the one that is in the most direct way towards the agent’s goal position, the agent will safely navigate towards its goal. ORCA has however two major limitations: deadlocks and local minima (of the objective function representing the path cost) (\cite{trautman2010unfreezing, rufli2013reciprocal}). Indeed in more complex scenarios such as cluttered / obstacle rich environments and moving obstacle environments, the agents can get stuck in local minima where the obstacles block the path to the target or produce poor trajectories as they do not plan for future obstacle locations. Algorithms based on vector fields are able to avoid such local minima by creating a null vector field, and thus a local minimum, only at the desired final location. In this paper, we give a guarantee that the agent reaches the goal and does not get stuck.
To take the \textit{probabilistic} motion into account, an extended velocity obstacle (EVO) validation system extends ORCA in the real world by including state uncertainties via a position filtering system which handles a noise model that is discontinuous and non-linear (\cite{levy2015extended}).
\end{comment}

More recently, new algorithms have been leveraging the qualities of machine learning to tackle problems in unknown / sparse dynamic environments. For example, an algorithm for socially aware multiagent collision avoidance with deep reinforcement learning (SA-CADRL) has been designed to produce more time-efficient paths than ORCA (\cite{chen2017socially, long2018towards}).

Algorithms based on harmonic potential flow can deal with multiple moving convex and star-shaped concave obstacles by applying a dynamic modulation matrix, dependent on the characteristics of the obstacles, to the original dynamic system (\cite{khansari2012dynamical, huber2019avoidance}). However, these velocity-based controllers do
not take into account the inertia of the robot or more realisitc kino-dynamic constraints.

Local methods represent a robot as a particle in the configuration space under the influence of an artificial potential field, whereas the direction of motion is generally chosen according to the gradient of the potential function (\cite{Latombe1991,warren1990multiple}). However, such methods suffer from important limitations: trap situations due to local minima (cyclic behavior), no passage between closely spaced obstacles, oscillations in the presence of multiple obstacles or narrow passages (\cite{koren1991potential}).

\begin{comment}
Global path planning methods such as rapidly-exploring random trees (RRT) have the advantage that they can be directly applied to nonholonomic and kinodynamic planning (\cite{lavalle1998rapidly}). Indeed, RRT generates a tree by providing the required control input making the link between two adjacent states. General motion planning between two locations can be used more efficiently than the classical RRT by incrementally building two rapidly-exploring random trees (RRTs) rooted at the start and the goal configurations (\cite{kuffner2000rrt, lavalle2001rapidly, goerzen2010survey}).
The previously mentioned field-based algorithms (harmonic flow, potential, etc) can further be combined with such global path planning to solve the problem of artificial potential field algorithms being liable to fall into a a local minimum (\cite{he2017obstacle}).

Other popular approaches for collision avoidance rely on curve parameterization for trajectory generation (\cite{upadhyay2017smooth}) or optimization-based algorithms that aim to solve nonlinear programs with different nonlinear programming (NLP) solvers (\cite{sun2017two}). Still, these methods are computationally expensive and not suited for real-time applications.
\end{comment}

To avoid the computational complexity of the previously discussed techniques, algorithms based on collision avoidance vector fields have been developed for specific types of dynamics. For example, \cite{marchidan2020collision} considered UAV applications by assuming constant-altitude operations and thus approximating the system with a planar kinematic Dubins model. In this case, the agent aims to avoid circular obstacles while keeping its steering angle close to a desired angle. \cite{panagou2014motion, panagou2016distributed} developed a motion planning method for nonholonomic (unicycle) robots in environments with static circular obstacles.
\cite{wilhelm2019vector} considered a gradient vector field (GVF) for real-time guidance systems and used lookup tables for optimal GVF decay radius and circulation. \cite{garg2019finite} also developed trajectories for agents under double-integrator dynamics with limited erroneous sensing capabilities in the presence of unknown wind disturbance and moving obstacles. However, the applicability of the previously mentioned methods are limited to circular obstacles whose shape does not vary with time. 

\textit{Contributions:} This paper presents a novel local motion-planning algorithm in environments populated by \textit{multiple moving obstacles with time-varying elliptical shapes}. We base our algorithm on a collision avoidance vector field (CAVF) that aims to steer an agent to a desired final state under double integrator dynamics.

We first assume that the agent has access to the exact position and velocity of the center of mass of the obstacles, as well as their shape and orientation. Secondly we consider obstacles with uncertain motion resulting in a position covariance that is growing over time. We also consider bounded environments for more realistic applications. Finally, we incorporate a method to deal with agents whose control input is limited so that they can safely navigate around the obstacles.

%Multi-agent systems (MAS) further require a task assignment scheme in addition to the avoidance of obstacles and other agents. This work is extended to multi-agent settings by incorporating a fast task allocation algorithm wrapped around our proposed collision avoidance method (\cite{braquet2021greedy}).

\textit{Outline}: The remainder of the paper is structured as follows. Section \ref{sec:prob_setup} presents the problem setup, Section \ref{sec:cavf_design} presents the core of our algorithm and Section \ref{sec:control} details the control law derived from the vector field. In addition, Section \ref{sec:simu} shows extensive simulations results and finally, Section \ref{sec:ccl} concludes this work.

\section{Problem Setup} \label{sec:prob_setup}

We consider an agent moving in an $N$-dimensional position space $\Omega \subseteq \mathbb{R}^N$ (where $N=2$ or $N=3$) which can be a bounded set. The state of the agent is defined as 
\[
    \bm{x} = \begin{pmatrix} \bm{P} \\ \bm{V} \end{pmatrix}
\]
where $\bm{P} \in \Omega$ is the position of the agent and $\bm{V} \in \mathbb{R}^N$ the velocity.
We consider an obstacle whose center of mass is initially located (at time $t=0$) at $\bm{P}_o \in \Omega$ and has a constant velocity $\bm{V}_o \in \mathbb{R}^N$, which implies that the obstacle's motion is reduced to a straight line if $\bm{V}_o \neq 0$ (if $\bm{V}_o = 0$, the obstacle is static). 

We also assume that the obstacle's boundary $\chi^b$ can be parametrically described as a function of two variables $u \in [0,2\pi]$ and $v \in [0,\pi]$. 
In the case in which the obstacle's boundary is an ellipse (2D), we have
\[
    \bm{P}^b(t;u) = \begin{pmatrix} a(t)\cos(u) \\ b(t)\sin(u) \end{pmatrix},
\]
whereas in the case of an ellipsoid (3D), we have
\[
    \bm{P}^b(t;u,v) = \begin{pmatrix} a(t)\cos(u)\sin(v) \\ b(t)\sin(u)\sin(v) \\ c(t)\cos(v) \end{pmatrix}.
\]
The velocity $\bm{V}_e^{\bm{P}}$ due to expansion/contraction at a point $\bm{P}^b \in \chi^b$, corresponding to a pair $(u,v)$, can be described as
\[
    \bm{V}_e^{\bm{P}} = \frac{ \partial \bm{P}^b(t;u,v) }{\partial t }.
    \label{eq:V_exp}
\]

The agent is subject to a double integrator dynamics:
\begin{equation} \label{eq:dyn}
    \dot{\bm{x}}(t) = \begin{pmatrix} \bm{V}(t) \\ \bm{u}(t) \end{pmatrix}, \quad \bm{x}(0) = \bm{x}_0 = \begin{pmatrix} \bm{P}_0 \\ \bm{V}_0 \end{pmatrix}
\end{equation}
where $\bm{u}(t) \in \mathbb{R}^N$ is the control input. %It is worth noting that, assuming a bounded input control, this dynamics takes into account the inertia of the robot.

Next, we formulate the main problem considered in this paper.

\begin{problem} \label{prob_init}
Let an agent be subject to the dynamics given in Eq. (\ref{eq:dyn}). Then, find a control input $\bm{u}(t) : [0,t_f] \rightarrow \mathbb{R}^N$ so that the agent will reach a final desired location $\bm{P}_f \in \Omega$ with zero velocity at a free final time $t_f$:
\[
    \bm{x}(t_f) = \bm{x}_f = \begin{pmatrix} \bm{P}_f \\ \bm{0} \end{pmatrix},
\]
while avoiding collisions with obstacles.
\end{problem}

\section{CAFV design} \label{sec:cavf_design}

The collision avoidance vector field (CAVF) is designed to match the desired agent's velocity along the state space. For example, the CAFV is directed towards the final location and is null at the final location. Because of obstacles appearing on the way from the agent's location to its final location, the CAFV has to be modulated so that the agent does not enter into collision (see Fig. \ref{fig:CAVF_design}).

\begin{figure}[ht!]
\centering
\begin{subfigure}{0.49\linewidth}
\includegraphics[width=\linewidth]{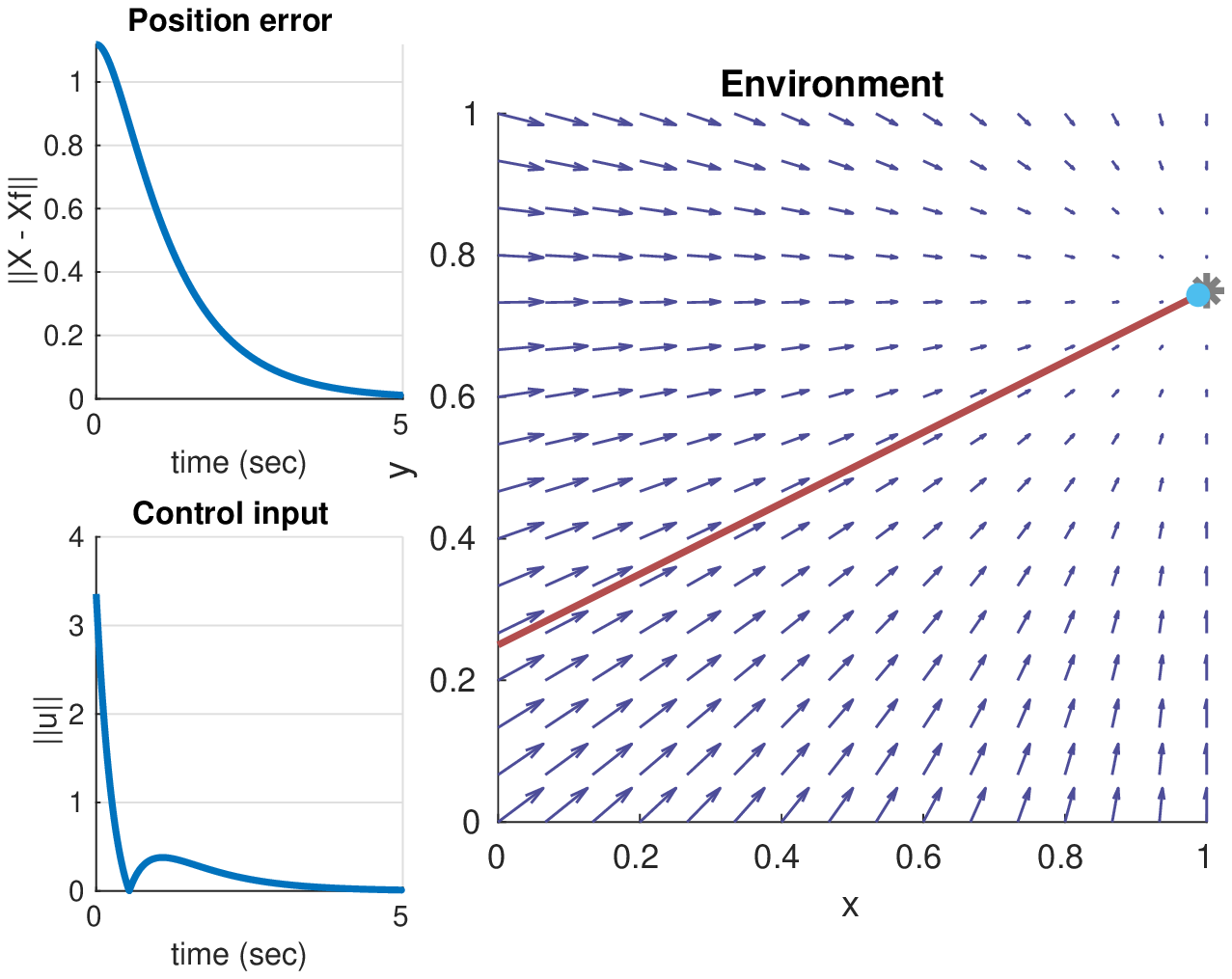}
\caption{Without obstacle.}
\end{subfigure}
\begin{subfigure}{0.49\linewidth}
\includegraphics[width=\linewidth]{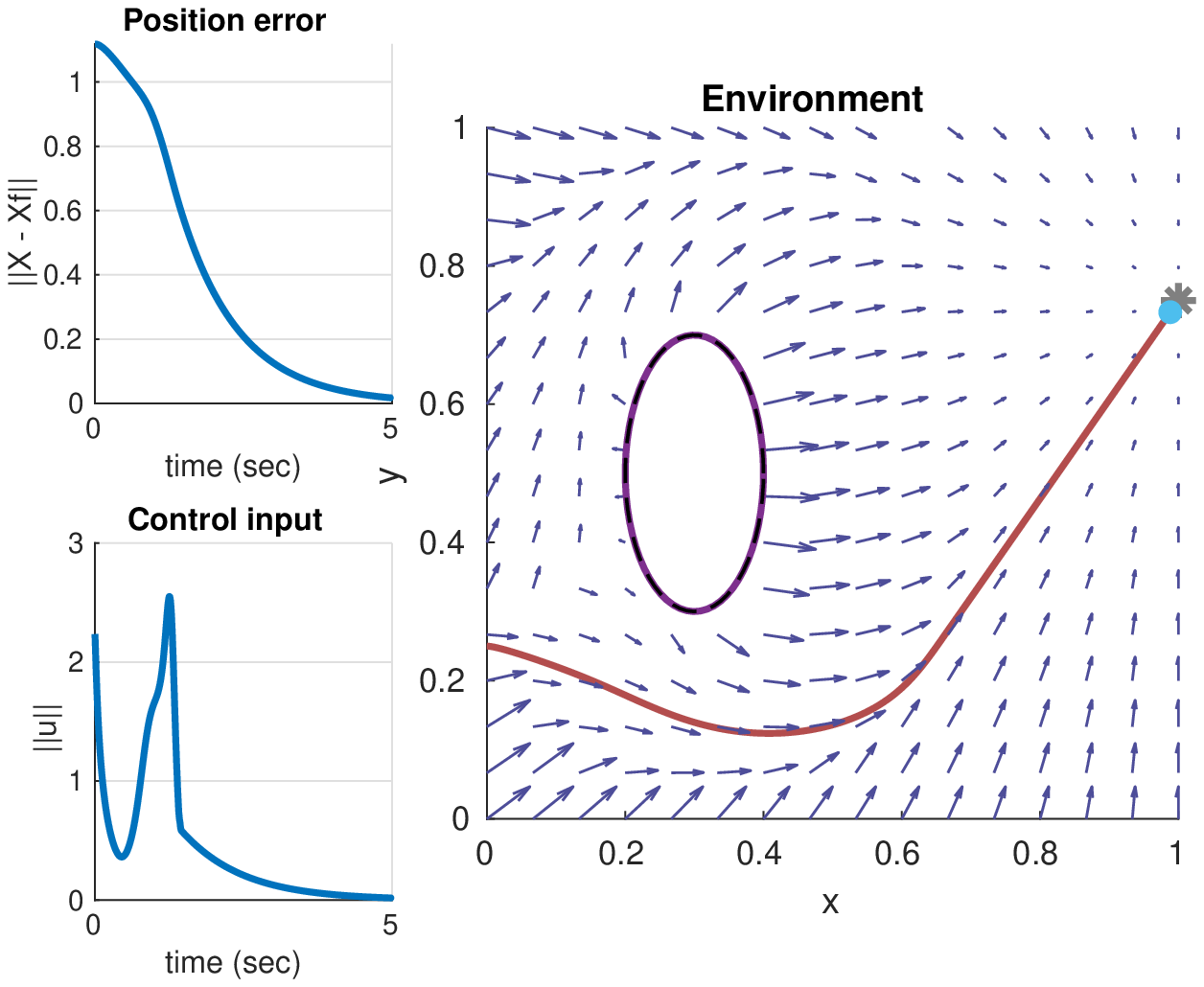}
\caption{With obstacle.}
\end{subfigure}
\caption{In the absence of obstacles (a), the CAFV points towards $\bm{P}_f$. In the presence of obstacles (b), the CAFV is modulated to avoid the obstacle ($\bm{P}_0 = [0,0.25]^\top$ and $\bm{P}_f = [1,0.75]^\top$).}
\label{fig:CAVF_design}
\end{figure}

\subsection{Static Obstacle}

For a static obstacle, the collision avoidance vector field $\bm{h}$ has to satisfy the following impenetrability condition at any point $\bm{P}^b$ of the obstacle boundary:
\begin{equation}
\label{eq:stat_bos}
\bm{h}(\bm{P}^b) \cdot \hat{\bm{n}}(\bm{P}^b) \ge 0 \qquad \textrm{for all } \bm{P}^b \in \chi^b,
\end{equation}
where $\hat{\bm{n}}(\bm{P})$ is the unit vector that is normal to the obstacle boundary (pointing outwards).
\begin{comment}
and is given by
\[
    \hat{\bm{n}}(\bm{P}) = \left\{ \begin{matrix}
     \frac{\bm{P}-\bm{P}^c(\bm{P})}{\|\bm{P}-\bm{P}^c(\bm{P})\| } & \textrm{if } \bm{P} \not\in \chi^b \\
     \frac{\bm{P}_l(\bm{P})-\bm{P}}{\|\bm{P}_l(\bm{P})-\bm{P}\| } & \textrm{if } \bm{P} \in \chi^b
    \end{matrix} \right.  
\]
where , and $\bm{P}^l(\bm{P})$ is any point satisfying $\bm{P}^c(\bm{P}^l(\bm{P})) = \bm{P}$ (i.e., any point on the line determined by $\bm{P} + \alpha \, \hat{\bm{n}}(\bm{P})$ for all $\alpha > 0$). 
\end{comment}
We define the vector field $\bm{h}$ as the superposition of an obstacle avoidance field $\bm{h}_o$ and a destination steering field $\bm{h}_f$ (to reach the aimed location); that is, $\bm{h}(\bm{P})=\bm{h}_o(\bm{P}) + \bm{h}_f(\bm{P})$:
\begin{align*}
   \bm{h}_o(\bm{P}) &= \bm{R}(\bm{P}) \ \|\bm{P}_f - \bm{P}\|^{-p} \ \gamma(\bm{P}) \, \|\bm{P}_f - \bm{P}\| \hat{\bm{n}}(\bm{P}^c(\bm{P})), \\
    \bm{h}_f(\bm{P}) &= \bm{R}(\bm{P}) \ \|\bm{P}_f - \bm{P}\|^{-p} \ (\bm{P}_f - \bm{P})
\end{align*}
and hence
\begin{align} \label{eq:h}
    &\bm{h}(\bm{P}) = \\
    &\bm{R}(\bm{P}) \ \|\bm{P}_f - \bm{P}\|^{-p} \Big[ \gamma(\bm{P}) \, \|\bm{P}_f - \bm{P}\| \hat{\bm{n}}(\bm{P}^c(\bm{P})) + (\bm{P}_f - \bm{P}) \Big] \nonumber
\end{align}
where $p \in ]0,1[$ is a modulation exponent which controls the difference of intensity across the vector field. $\bm{P}^c(\bm{P})$ is defined as the point on the obstacle boundary that is closest to $\bm{P}$, which can be computed via an iterative optimization algorithm (\cite{nurnberg2006distance}). We introduce a location-dependent factor $\gamma \in [0,1]$ which is defined as follows:
\[
    \gamma(\bm{P}) = \frac{a_i \ x(\bm{P})}{\sqrt{1+(2 \ a_i \ x(\bm{P}))^2}} + \frac{1}{2}
\]
where $x(\bm{P}) := (d(\bm{P})+d_2(\bm{P})) / (d(\bm{P}) \, d_2(\bm{P}))$, $d_i$ is the influence distance of the obstacle (typically in the order of the obstacle semi-major axis), $d(\bm{P}) = \|\bm{P}-\bm{P}^c(\bm{P})\|$ and $d_2(\bm{P}) = d(\bm{P})-d_i$. In addition, $a_i$ is a modulation constant to modify the shape of the sigmoid given by $\gamma$.
The vector field $\bm{h}$ guarantees that $\bm{h}_o = \bm{0}$ when the agent is located at the influence distance since $d=d_i$ implies $x \rightarrow -\infty$ and hence $\gamma = 0$.
Furthermore, a rotation operator $\bm{R}$ of angle $\alpha_R$ is applied to the vector field so that the agent goes around the obstacle when the obstacle is on the line of sight between the agent and the target position. 
More formally, for $N=2$, the matrix $\bm{R}$ at a point $\bm{P}$ is defined as 
\[
    \bm{R}(\bm{P}) = 
    \begin{bmatrix}
     \cos(\alpha_R(\bm{P})) & \sin(\alpha_R(\bm{P})) \\
     - \sin(\alpha_R(\bm{P})) & \cos(\alpha_R(\bm{P})) \\
    \end{bmatrix}.
\]
Let us define $\beta(\bm{P}) = e^{-b_i x^2(\bm{P})}$ where $b_i \in \mathbb{R}$ is a fixed parameter. The rotation angle $\alpha_R$ is then computed proportionally to the angle between $\hat{\bm{n}}$ and $\bm{P}_f-\bm{P}_o$:
\[
    \alpha_R(\bm{P}) = \frac{\beta(\bm{P})}{2} \angle ( \hat{\bm{n}}(\bm{P}), \bm{P}_f-\bm{P}_o)
\]
where $\bm{P}_f-\bm{P}_o$ corresponds to the vector pointing from the obstacle center $\bm{P}_o$ to the final position  $\bm{P}_f$. The factor $\beta$ is introduced to ensure that the vector field is not rotated (that is, $\beta = 0$ and hence $\bm{R} = \bm{I}_N$ where $\bm{I}_N$ is the identity matrix of order $N$) when the agent is located at the influence distance of the obstacle (continuity of the vector field), and when the agent is located on the obstacle boundary (non-penetrability of the obstacle).
%In 3D, $\bm{R}$ is defined as the rotation matrix in the plane determined by the vectors $\hat{\bm{n}}$ and $\bm{P}_f-\bm{P}_o$.

Now that the vector field has been described analytically, we can analyze two particular cases. The first scenario consists in the vectors $\hat{\bm{n}}$ and $\bm{P}_f-\bm{P}_o$ making an angle of $\pi$ (singular case), corresponding to a point $\bm{P}$ exactly behind the obstacle with respect to the desired location. In this case, the field is rotated by an angle $\alpha \in [-\pi/2,\pi/2]$ so that the agent deviates from its initial trajectory to go around the obstacle (forming a singularity in the vector field). The second scenario appears when these two vectors $\hat{\bm{n}}$ and $\bm{P}_f-\bm{P}_o$ are aligned, that is, when the point $\bm{P}$ lies in between the obstacle and the target. In this case, the vector field is not rotated ($\alpha = 0$) since the agent is in front of the obstacle and can freely move to the target. The vector field is illustrated in Fig.~\ref{fig:path_sing} where the target position is represented by a red cross.

\begin{figure}[h]
  \centering
  \includegraphics[width=0.8\linewidth]{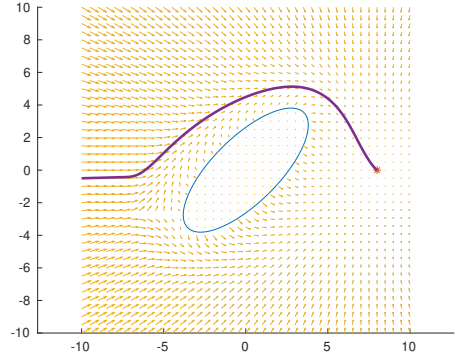}
  \caption{Vector field and associated path line for an initial agents location $\bm{P}_0 = [-10,0]^\top$. The singularity appears around $[-5,-3]^\top$ (behind the obstacle).}
  \label{fig:path_sing}
\end{figure}

\begin{proposition} \label{prop:imp_stat}
With the vector field defined in Eq. (\ref{eq:h}), the condition of impenetrability given in Eq. (\ref{eq:stat_bos}) is guaranteed for the case of a static obstacle.
\end{proposition}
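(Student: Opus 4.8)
Here is how I would attack Proposition~\ref{prop:imp_stat}.

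\medskip\noindent\textbf{Proof proposal.}
The plan is to verify the impenetrability condition~(\ref{eq:stat_bos}) by evaluating the field~(\ref{eq:h}) \emph{on} the obstacle boundary and then invoking the fact that $\hat{\bm n}$ has unit length. Fix an arbitrary boundary point $\bm P^b\in\chi^b$. Since the boundary point closest to $\bm P^b$ is $\bm P^b$ itself, we have $\bm P^c(\bm P^b)=\bm P^b$, hence $\hat{\bm n}(\bm P^c(\bm P^b))=\hat{\bm n}(\bm P^b)$ and $d(\bm P^b)=\|\bm P^b-\bm P^c(\bm P^b)\|=0$. Rewriting $x(\bm P)=\tfrac{1}{d(\bm P)}+\tfrac{1}{d(\bm P)-d_i}$, one sees $x(\bm P)\to+\infty$ as $\bm P\to\bm P^b$ (i.e. as $d(\bm P)\to 0^+$); consequently $\gamma(\bm P)\to\tfrac12+\tfrac12=1$ and $\beta(\bm P)=e^{-b_i x^2(\bm P)}\to 0$, so that $\alpha_R(\bm P)\to 0$ and $\bm R(\bm P)\to\bm I_N$. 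This is exactly the boundary behaviour of $\gamma$ and $\bm R$ already anticipated in the text, and it relies on the sign conventions $a_i>0$, $b_i>0$.

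Substituting these boundary values into~(\ref{eq:h}) gives
\[
   \bm h(\bm P^b)=\|\bm P_f-\bm P^b\|^{-p}\Big[\,\|\bm P_f-\bm P^b\|\,\hat{\bm n}(\bm P^b)+(\bm P_f-\bm P^b)\,\Big],
\]
so, taking the inner product with $\hat{\bm n}(\bm P^b)$ and using $\hat{\bm n}(\bm P^b)\cdot\hat{\bm n}(\bm P^b)=1$,
\[
   \bm h(\bm P^b)\cdot\hat{\bm n}(\bm P^b)=\|\bm P_f-\bm P^b\|^{-p}\Big[\,\|\bm P_f-\bm P^b\|+(\bm P_f-\bm P^b)\cdot\hat{\bm n}(\bm P^b)\,\Big].
\]
By the Cauchy--Schwarz inequality, $(\bm P_f-\bm P^b)\cdot\hat{\bm n}(\bm P^b)\ge-\|\bm P_f-\bm P^b\|\,\|\hat{\bm n}(\bm P^b)\|=-\|\bm P_f-\bm P^b\|$, so the bracketed factor is non-negative; and since $\bm P_f\notin\chi^b$ the scalar $\|\bm P_f-\bm P^b\|^{-p}$ is strictly positive. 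Hence $\bm h(\bm P^b)\cdot\hat{\bm n}(\bm P^b)\ge0$ for every $\bm P^b\in\chi^b$, which is~(\ref{eq:stat_bos}).

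The one step that needs real care is the passage to the limit as $\bm P$ approaches the boundary: one must check that $\gamma$, $\beta$, and therefore $\bm R$, extend continuously to $\chi^b$ with the values $1$, $0$, and $\bm I_N$ respectively, since this is precisely what makes the gain $\gamma$ collapse to $1$ and the rotation disappear at the boundary (without the rotation vanishing, the inner product with $\hat{\bm n}$ could lose a sign); everything after that is a one-line application of Cauchy--Schwarz. It is also worth recording that the bound is tight: equality $\bm h(\bm P^b)\cdot\hat{\bm n}(\bm P^b)=0$ holds exactly when $\bm P_f-\bm P^b$ is anti-parallel to $\hat{\bm n}(\bm P^b)$ --- the ``directly behind the obstacle'' configuration discussed before the statement --- so the agent is permitted to graze the boundary tangentially there but never to cross it, which is consistent with the intended design.
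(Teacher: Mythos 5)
Your proposal is correct and follows essentially the same argument as the paper: evaluate the field on the boundary where $\gamma=1$, $\beta=0$, $\bm{R}=\bm{I}_N$, then bound the inner product with $\hat{\bm{n}}$ via Cauchy--Schwarz. The additional remarks on continuity of $\gamma$, $\beta$ at the boundary and the tightness of the bound are sensible refinements but do not change the route.
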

\begin{proof}
When the agent is located at a point $\bm{P}^b \in \chi^b$ (on the boundary of the obstacle), we obtain $\bm{P}^c(\bm{P}^b) = \bm{P}^b$, $d = 0$, $x  \rightarrow \infty$ and hence $\gamma = 1$. In addition, we have $\beta = 0$, $\alpha_R = 0$ and thus $\bm{R} = \bm{I}_N$. Given the particular value of the vector field
\[
    \bm{h}(\bm{P}^b) =  \|\bm{P}_f - \bm{P}\|^{-p} \Big[ \|\bm{P}_f - \bm{P}\| \hat{\bm{n}}(\bm{P}^b) + (\bm{P}_f - \bm{P}) \Big],
\]
we obtain
\begin{align*}
    & \bm{h}(\bm{P}^b) \cdot \hat{\bm{n}}(\bm{P}^b)  \\ &~~~=\|\bm{P}_f - \bm{P}\|^{-p} \Big[ \|\bm{P}_f - \bm{P}\| + (\bm{P}_f - \bm{P}) \cdot \hat{\bm{n}}(\bm{P}^b) \Big] \\
    &~~~=  \|\bm{P}_f - \bm{P}\|^{-p} \Big[ \|\bm{P}_f - \bm{P}\| + \underbrace{(\bm{P}_f - \bm{P}) \cdot \hat{\bm{n}}(\bm{P}^b)}_{\ge -\|\bm{P}_f - \bm{P}\|} \Big] \ge 0,
\end{align*}
which verifies Eq. (\ref{eq:stat_bos}) of impenetrability and thus concludes the proof.
\end{proof}

\subsection{Moving Obstacle}

The velocity $\bm{V}_b$ of the obstacle boundary at $\bm{P}^b$ can be the result of multiple contributing terms. This paper will examine 2 movements: constant velocity translation ($\bm{V}_o$) and expansion/contraction of the boundary in time ($\bm{V}_e^{\bm{P}}$) so that $\bm{V}_b = \bm{V}_o + \bm{V}_e^{\bm{P}}$.
Additionally, we consider the obstacle’s velocity in the normal direction of the obstacle boundary only if it is greater than zero ($\bm{V}_b \cdot \hat{\bm{n}} > 0$) so that the agent is not attracted to the obstacle when it is moving away from it.

To account for the fact that the obstacle is moving, we introduce an additional term $\bm{h}_v(\bm{P})= \gamma \bm{V}_b$ to the vector field such that the total vector field $\bm{h}_m(\bm{P})$ for a moving obstacle is given by
\begin{equation} \label{eq:hmov}
    \bm{h}_m(\bm{P}) = \bm{h}(\bm{P}) + \bm{h}_v(\bm{P}).
\end{equation}
When the obstacles are moving, the impenetrability condition requires that the component of the vector field normal to the boundary has to satisfy the following inequality: 
\begin{equation}
\label{eq:mov_bos}
\bm{h}_m(\bm{P}^b) \cdot \hat{\bm{n}}(\bm{P}^b) \ge \bm{V}_b \cdot \hat{\bm{n}}(\bm{P}^b) \qquad \textrm{for all } \bm{P}^b \in \chi^b.
\end{equation}

\begin{proposition} \label{prop:imp_mov}
With the vector field defined in Eq. (\ref{eq:hmov}), the condition of impenetrability given in Eq. (\ref{eq:mov_bos}) is guaranteed for an obstacle moving at constant velocity $\bm{V}_b$.
\end{proposition}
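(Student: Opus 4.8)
The plan is to reduce the claim directly to Proposition~\ref{prop:imp_stat}, exploiting the special structure of the added term $\bm{h}_v(\bm{P}) = \gamma(\bm{P})\,\bm{V}_b$. First I would recall the pointwise observation already used in the proof of Proposition~\ref{prop:imp_stat}: when the agent sits on the obstacle boundary, $\bm{P} = \bm{P}^b \in \chi^b$, one has $\bm{P}^c(\bm{P}^b) = \bm{P}^b$, $d = 0$, $x \to \infty$, hence $\gamma(\bm{P}^b) = 1$ (and likewise $\beta = 0$, $\alpha_R = 0$, $\bm{R}(\bm{P}^b) = \bm{I}_N$). Consequently the velocity-compensation term, evaluated on the boundary, collapses to exactly the obstacle boundary velocity: $\bm{h}_v(\bm{P}^b) = \gamma(\bm{P}^b)\,\bm{V}_b = \bm{V}_b$.

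Next I would expand the left-hand side of Eq.~(\ref{eq:mov_bos}) using the definition in Eq.~(\ref{eq:hmov}) and the previous observation:
\[
    \bm{h}_m(\bm{P}^b)\cdot\hat{\bm{n}}(\bm{P}^b)
    = \bm{h}(\bm{P}^b)\cdot\hat{\bm{n}}(\bm{P}^b) + \bm{h}_v(\bm{P}^b)\cdot\hat{\bm{n}}(\bm{P}^b)
    = \bm{h}(\bm{P}^b)\cdot\hat{\bm{n}}(\bm{P}^b) + \bm{V}_b\cdot\hat{\bm{n}}(\bm{P}^b).
\]
Proposition~\ref{prop:imp_stat} supplies $\bm{h}(\bm{P}^b)\cdot\hat{\bm{n}}(\bm{P}^b) \ge 0$ — and note that its proof is purely a pointwise evaluation of $\bm{h}$ on $\chi^b$, so nothing in it uses staticity of the obstacle and it carries over verbatim. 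Therefore the right-hand side above is bounded below by $\bm{V}_b\cdot\hat{\bm{n}}(\bm{P}^b)$, which is precisely the required inequality Eq.~(\ref{eq:mov_bos}). A small bookkeeping point to address: by the convention stated before Eq.~(\ref{eq:hmov}), the normal component of $\bm{V}_b$ is only injected into the field when $\bm{V}_b\cdot\hat{\bm{n}} > 0$; in the complementary case $\bm{V}_b\cdot\hat{\bm{n}}(\bm{P}^b) \le 0$ the boundary is locally receding and Eq.~(\ref{eq:mov_bos}) holds trivially from Proposition~\ref{prop:imp_stat} alone, since then $\bm{h}(\bm{P}^b)\cdot\hat{\bm{n}}(\bm{P}^b) \ge 0 \ge \bm{V}_b\cdot\hat{\bm{n}}(\bm{P}^b)$, with no compensation term needed. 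So the argument splits into these two cases, the second being immediate.

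Honestly I do not expect a genuine obstacle here: the entire content is the identity $\gamma(\bm{P}^b) = 1$, which makes $\bm{h}_v$ on the boundary absorb exactly the margin $\bm{V}_b\cdot\hat{\bm{n}}$ by which Eq.~(\ref{eq:mov_bos}) differs from Eq.~(\ref{eq:stat_bos}). The only thing worth being slightly careful about is the reading of ``constant velocity $\bm{V}_b$'': since in general $\bm{V}_b = \bm{V}_o + \bm{V}_e^{\bm{P}}$ is point-dependent, I would interpret the hypothesis as the purely translating case ($a,b,c$ constant, $\bm{V}_e^{\bm{P}} = \bm{0}$, so $\bm{V}_b = \bm{V}_o$) — though the pointwise computation above in fact needs only that the same vector $\bm{V}_b$ appears in both Eq.~(\ref{eq:hmov}) and Eq.~(\ref{eq:mov_bos}) at the boundary point under consideration, and is finite there. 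With that understood, the pointwise evaluation closes the proof.
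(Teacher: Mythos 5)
Your proof is correct and follows essentially the same route as the paper's: evaluate on the boundary where $\gamma(\bm{P}^b)=1$, so $\bm{h}_v(\bm{P}^b)=\bm{V}_b$, and then invoke the nonnegativity of $\bm{h}(\bm{P}^b)\cdot\hat{\bm{n}}(\bm{P}^b)$ from Proposition~\ref{prop:imp_stat} to obtain Eq.~(\ref{eq:mov_bos}). Your extra case split for $\bm{V}_b\cdot\hat{\bm{n}}(\bm{P}^b)\le 0$ and the remark on interpreting ``constant velocity $\bm{V}_b$'' are harmless refinements the paper leaves implicit.
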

\begin{proof}
When the agent is located at the boundary $\bm{P}^b$ of the obstacle, we have $\gamma = 1$. It follows that
\begin{align*}
    \bm{h}_m(\bm{P}^b) \cdot \hat{\bm{n}}(\bm{P}^b) &=  \underbrace{\bm{h}(\bm{P}^b) \cdot \hat{\bm{n}}(\bm{P}^b)}_{\ge 0 \textrm{ from Eq. (\ref{eq:stat_bos}})} + \bm{V}_b \cdot \hat{\bm{n}}(\bm{P}^b)  \\
    & \ge \bm{V}_b \cdot \hat{\bm{n}}(\bm{P}^b),
\end{align*}
which verifies Eq. (\ref{eq:mov_bos}) (condition of impenetrability) and thus concludes the proof.
\end{proof}

In addition, $\bm{h}_v(\bm{P}) = \bm{0}$ when the agent is located at the influence distance ($d=d_i$) so that $\bm{h}(\bm{P}) = \bm{h}_f(\bm{P})$ as if there were no obstacles.

\begin{comment}

\subsection{Uncertain moving obstacle}

We now consider an obstacle whose state is not precisely known by the agent. To this end, the state $\bm{x}_o$ of the obstacle is represented with a Gaussian density function characterized by its mean $\bar{\bm{x}}_o$ and covariance matrix $\bm{P}_{o}$. When the obstacle is moving, the uncertainty (i.e., the covariance matrix) is growing with time. 

To reduce this uncertainty, we consider the case where the agent has access to local information through sensors. In this case, such additional information is decreasing the uncertainty about the state of the obstacle as the agent progresses close to this obstacles.

To illustrate the principle, we consider a double integrator for the motion of the obstacle. We also assume that the agent has access to a (noisy) sensor providing the distance to the obstacle in polar coordinates with respect to the agent.

An extended Kalman filter is then used to integrate prediction and sensing. Prediction is achieved via the obstacle dynamics to predict the state estimate of the obstacle over time as it is moving in the environment. Sensing is then used to merge the local data from the agent's sensors and hence update the state estimate.

\end{comment}

\subsection{Multiple Obstacles}

Next, we consider the case of multiple (non-overlapping) obstacles.
Let $d_j = \|\bm{P}-\bm{P}^c_j(\bm{P})\|$ be the distance between $\bm{P}$ and $\bm{P}^c_j(\bm{P})$, where $\bm{P}^c_j(\bm{P})$ is the point on the boundary of obstacle $j$ that is closest to $\bm{P}$.
When there are $M$ obstacles, the vector field is extended by taking the sum of the local CAVFs weighted by their distance to the agent's position $\bm{P}$ so that the vector field at the boundary of obstacle $i$ is only determined by the local vector field $\bm{h}_i(\bm{P})$ associated to this obstacle: 
\begin{equation} \label{eq:hmult}
    \bm{h}(\bm{P}) = \sum_{i=1}^M w_i(\bm{P}) \, \bm{h}_i(\bm{P}),
\end{equation}
where
\[
    w_i(\bm{P}) = \frac{\prod_{j\neq i}^M d_j}{\sum_{i=1}^M \prod_{j\neq i}^M d_j} \quad \textrm{and hence} \quad \sum_{i=1}^M w_i(\bm{P}) = 1.
\]
\begin{proposition}
With the vector field defined in Eq. (\ref{eq:hmult}), the condition of impenetrability given in Eq. (\ref{eq:stat_bos}) is guaranteed for the case of multiple (non-overlapping) obstacles.
\end{proposition}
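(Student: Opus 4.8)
The plan is to reduce the multi-obstacle case to the single-obstacle case already settled in Proposition \ref{prop:imp_stat}, by showing that the convex-combination weights $w_i$ collapse onto a single obstacle as soon as the agent sits on a boundary. Fix an obstacle index $i$, denote by $\chi^b_i$ the boundary of obstacle $i$, and take a point $\bm{P}^b \in \chi^b_i$. First I would record the two facts that follow from $\bm{P}^b$ lying on $\chi^b_i$ together with the obstacles being non-overlapping: $d_i(\bm{P}^b) = \|\bm{P}^b - \bm{P}^c_i(\bm{P}^b)\| = 0$, while $d_j(\bm{P}^b) > 0$ for every $j \neq i$, since a point on the boundary of obstacle $i$ has strictly positive distance to every other, disjoint, obstacle.

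Next I would evaluate the weights at $\bm{P}^b$. For any index $k \neq i$, the numerator $\prod_{j \neq k} d_j$ contains the factor $d_i = 0$, hence vanishes; consequently the denominator $\sum_{k=1}^M \prod_{j \neq k} d_j$ reduces to its single surviving term $\prod_{j \neq i} d_j$, a product of strictly positive numbers and therefore nonzero. Thus $w_i(\bm{P}^b) = \prod_{j\neq i} d_j \big/ \prod_{j\neq i} d_j = 1$ and $w_k(\bm{P}^b) = 0$ for all $k \neq i$. This is also the only place the \emph{non-overlapping} hypothesis is used: it guarantees the denominator does not itself vanish, so that each $w_i$ is well defined on every boundary.

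Substituting this into Eq. (\ref{eq:hmult}) gives $\bm{h}(\bm{P}^b) = \bm{h}_i(\bm{P}^b)$, i.e. on $\chi^b_i$ the global field coincides with the local field of obstacle $i$. Proposition \ref{prop:imp_stat}, applied to obstacle $i$ alone, then yields $\bm{h}_i(\bm{P}^b) \cdot \hat{\bm{n}}(\bm{P}^b) \ge 0$, which is precisely Eq. (\ref{eq:stat_bos}) evaluated at $\bm{P}^b$. Since $i$ and $\bm{P}^b \in \chi^b_i$ were arbitrary, the impenetrability condition holds on the boundary of every obstacle, concluding the proof.

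There is no genuinely hard step here; the only point that deserves care is the well-definedness of $w_i$ on the boundaries, which is exactly where the non-overlapping assumption enters, plus a brief remark that although the individual weights are merely piecewise smooth, the argument only evaluates them pointwise at a boundary point, so no continuity subtlety arises. If one additionally wanted the moving-obstacle analogue (Eq. (\ref{eq:mov_bos})) for $M$ obstacles, the identical weight-collapse argument combined with Proposition \ref{prop:imp_mov} in place of Proposition \ref{prop:imp_stat} would give it.
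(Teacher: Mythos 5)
Your proposal is correct and follows essentially the same route as the paper's proof: evaluating the weights at a boundary point to get $w_j = 1$ for the obstacle in question and $w_k = 0$ otherwise, so that $\bm{h}$ reduces to the local field, and then invoking Proposition \ref{prop:imp_stat}. You merely spell out the weight computation (and the role of the non-overlapping assumption in keeping the denominator nonzero) more explicitly than the paper does, which is a welcome but not substantively different addition.
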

\begin{proof}
When the agent is located at the boundary $\bm{P}^b_j$ of obstacle $j$, we obtain $w_i(\bm{P}^b_j) = 1$ for $i = j$ and $w_i(\bm{P}^b_j) = 0$ for $i \neq j$. The resulting vector field is thus given by $\bm{h}(\bm{P}^b_j) = \bm{h}_j(\bm{P}^b_j)$. The impenetrability condition for multiple obstacles is therefore guaranteed since the impenetrability condition at the boundary of a single obstacle has already been proven in Proposition \ref{prop:imp_stat}.
\end{proof}

\subsection{Bounded Environment}

We now consider an environment that is a convex polygonal set such that the agent is not allowed to exit through its boundaries (for instance $\Omega$ is a rectangular domain in 2D or a rectangular parallelepiped in 3D). This extension is implemented by assigning a flat ellipsoidal obstacle to each edge. More formally, for an ellipse whose semi-axis lengths are given by $a$ and $b$, we consider a large semi-major over semi-minor length ratio: $a/b \gg 1$. 

\section{Control Law} \label{sec:control}

As mentioned earlier, we consider a double integrator dynamics given in Eq. (\ref{eq:dyn}).
% \[    \ddot{\bm{P}}(t) = \bm{u}(t). \]
Since the vector field $\bm{h}$ reflects the desired agent's velocity $\bm{V}(t)$, we aim to find a control law providing a link between $\dot{\bm{h}}$ and $\dot{\bm{V}}(t) = \bm{u}(t)$. We therefore consider a control law mixing the proportional gain and the derivative of the vector field:
\[
\bm{u} = k_p (\bm{h}-\bm{V}) + k_v \dot{\bm{h}}
\]
where $k_p \in \mathbb{R}$ is the proportional gain, $k_v \in \mathbb{R}$ is the derivative gain and $\dot{\bm{h}} = \nabla \bm{h} \cdot \bm{V}$ following the chain rule.
The agent will only follow the CAVF (velocity vectors) if the initial velocity matches the velocity of the vector field at the initial position. For example when $\bm{V} = \bm{0}$ at start, we have $\dot{\bm{h}} = \nabla \bm{h} \cdot \bm{V} = \bm{0}$ as well. If we use the simple control law $\bm{u} = \dot{\bm{h}}$, we would obtain $\bm{u} = \bm{0}$ and the agent would never start. We thus also need a proportional controller at start to launch the agent and make it match the desired velocity. Once the desired speed is similar to the local vector field ($\bm{h} \simeq \bm{V}$), the second term of the proposed control law based on $\dot{\bm{h}}$ takes the lead, taking into account not only the value of $\bm{h}$ at the agent’s location but also the variation $\nabla \bm{h}$ of the vector field around this location to better predict the best trajectory. This gradient can be computed numerically via the symmetric difference quotient.

\subsection{Norm-Constrained Control Input}

When the input is constrained, the agent might not be able to escape an obstacle if its speed is too high near the obstacle. We consider the case in which $\bm{u} \in \mathcal{U} = \{\bm{u} \in \mathbb{R}^N : \|\bm{u}\| \le \bar{u} \}$ where $\bar{u}\in \mathbb{R}^+$. The time to stop the agent with initial velocity $\bm{V}_0$ is then $t = \|\bm{V}_0\| / \bar{u}$. Based on the motion equation $x(t) = - \bar{u}t^2/2 + V_0t$, the distance that has been traveled in this time is given by $$R = \frac{V_0^2}{2\bar{u}}$$, where $V_0$ is the radial velocity of the agent towards the obstacle.
We artificially inflate each obstacle by the distance $R$ so that the agent can adjust its trajectory to avoid lying inside the inflated boundary of the obstacle. It is worth noting that this technique which aims at avoiding collision, it does not guarantee impenetrability (neither of the inflated obstacle nor of the real obstacle) in all conditions and further research is needed. 

When the obstacle is moving, the velocity of the agent in the obstacle frame is given by $V_0 - V_b$ where $V_b$ is the velocity at the obstacle boundary along the line of sight towards the agent (composed of the obstacle velocity and the expansion velocity).
%If the agent lies inside this inflated boundary, obstacle avoidance is still possible but it is not guaranteed. 

\section{Numerical Simulations} \label{sec:simu}

The following simulations are computed in 2D and 3D scenarios for a system with double integrator dynamics. The agent is considered to be a particle (point mass) and all the simulations have been realized with the following parameters: $d_i = 0.3, a_i = 0.01$ and $p=0.5$. The video presenting all the cases presented in this section is available at \url{https://youtu.be/xSIyvFngA1M}. The source code used to produce the paper is available at \url{https://github.com/MartinBraquet/vector-field-obstacle-avoidance}.

\subsection{Multiple Static Obstacles}

The CAVF design for 2D static obstacles has been presented in Section~\ref{sec:cavf_design} (see Fig. \ref{fig:CAVF_design}). Fig. \ref{fig:multiellipsoid} presents the motion of an agent in a 3D space, moving around multiple ellipsoids and reaching the desired final location.

\begin{figure}[H]
  \centering
  \includegraphics[width=1\linewidth]{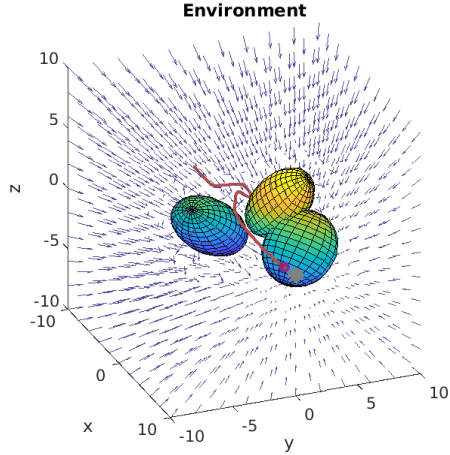}
  \caption{3D obstacle avoidance with multiple ellipsoids.}
  \label{fig:multiellipsoid}
\end{figure}

\begin{comment}
\begin{figure}[H]
  \centering
  \includegraphics[width=\linewidth]{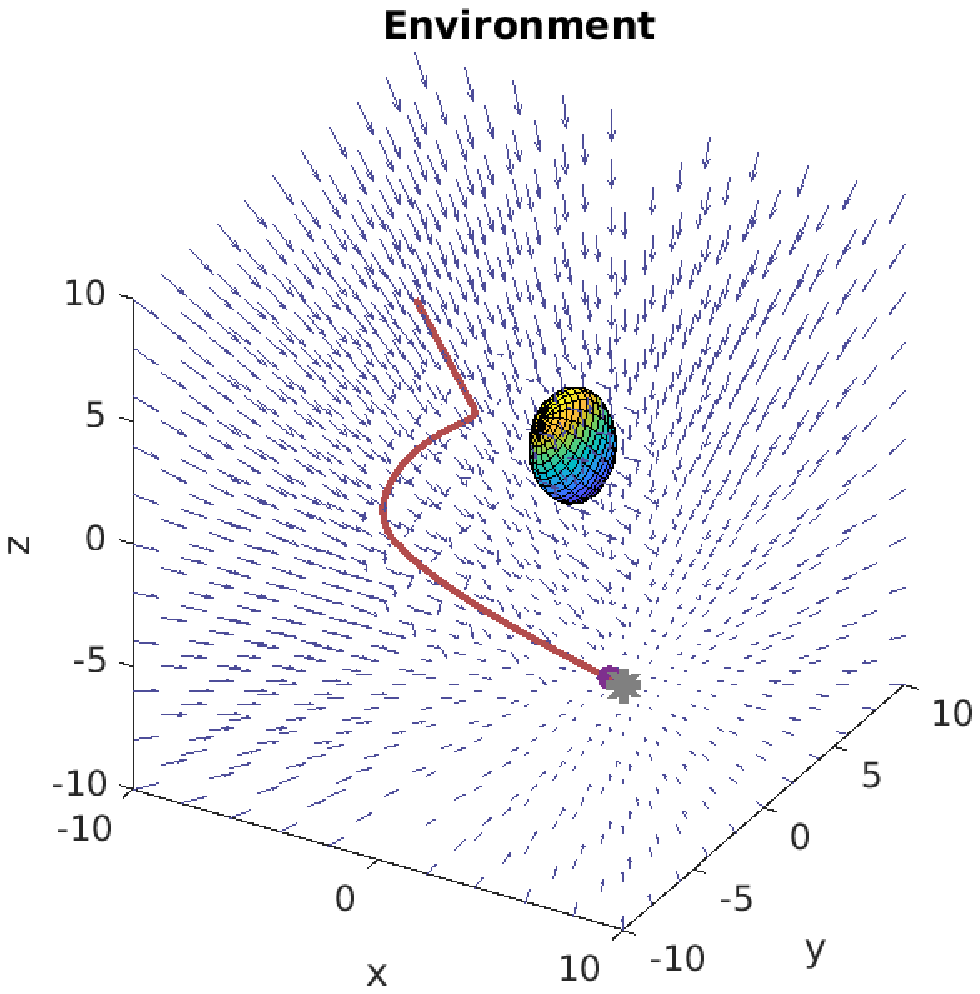}
  \caption{3D obstacle avoidance with ellipsoid ($\bm{P}_0 = [-10,10,0]^\top$ and $\bm{P}_f = [10,-10,0]^\top$).}
  \label{fig:ellipsoid}
\end{figure}
\end{comment}

\subsection{Moving Obstacles with Time-Varying Shape}

In Fig. \ref{fig:movingellipse}, the elliptical obstacle is located in the bottom-right quadrant at start ($\bm{P}_o = [0.6,0.2]^\top$) and is moving upwards at constant velocity (dashed lines).

\begin{figure}[H]
  \centering
  \includegraphics[width=1\linewidth]{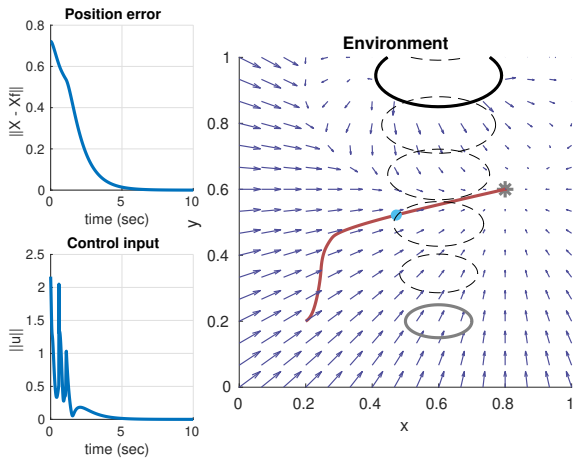}
  \caption{2D obstacle avoidance with a moving ellipse of time-varying shape represented in plain gray (initial position), dashed black (intermediate positions) and plain black (final position) ($\bm{P}_0 = [0.2,0.2]^\top$ and $\bm{P}_f = [0.8,0.6]^\top$).}
  \label{fig:movingellipse}
\end{figure}

Additionally, moving obstacles lead to growing uncertainty in probabilistic motion. We consider scenarios where such uncertainty is represented by an ellipse of time-varying shape (growing with time, see video whose link is given at the start of the section for further illustrations). The agent starts avoiding the obstacle by going above the ellipse. Since the agent is moving with a similar speed as the obstacle, he then decides to slow down and let the obstacle move upwards so that he reaches his target location by going below the obstacle. 

The norm of the position error, given by $\|e(t)\| = \|\bm{P}(t) - \bm{P}_f\|$, is decreasing exponentially.
%(as discussed in Section \ref{sec:convergence}).
The norm of the control input $u(t)$, which is assumed unbounded in this simulation is high at the beginning since the initial zero velocity does not match the desired velocity of the vector field at the initial agent's location. 

\begin{comment}
\subsection{Multiple Obstacles}

\begin{figure}[H]
  \centering
  \includegraphics[width=1\linewidth]{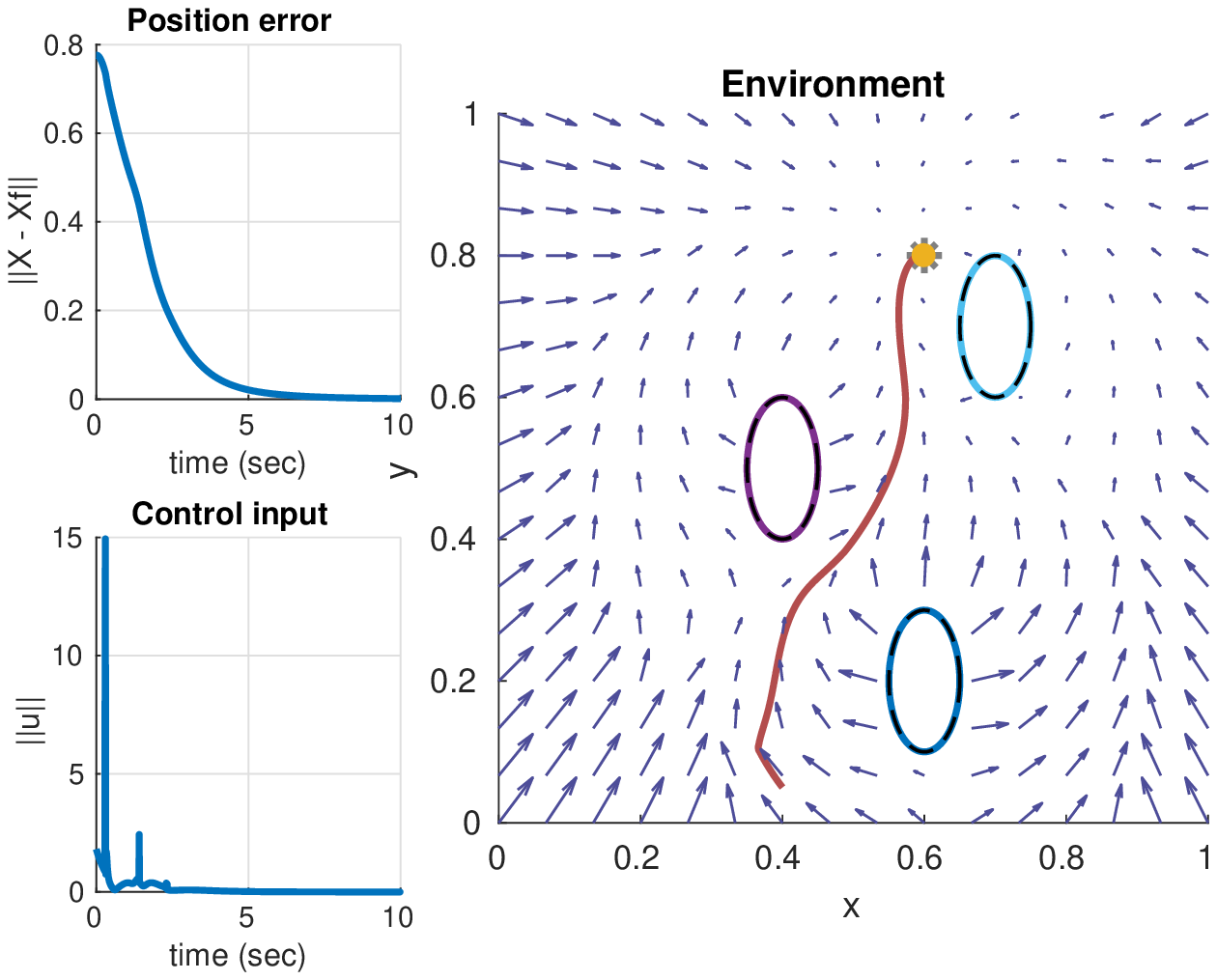}
  \caption{2D obstacle avoidance with multiple ellipses ($\bm{P}_0 = [0.4,0.05]^\top$ and $\bm{P}_f = [0.6,0.8]^\top$).}
  \label{fig:multiellipse}
\end{figure}
\end{comment}

\subsection{Bounded Environment}

As shown in Fig. \ref{fig:walls}, the environment is a square domain, that is, bounded by four walls, and an ellipse is located at the center. The vector field is directed outward around the ellipse and inward next to the walls, the agent is thus guaranteed to stay in the desired area and he is correctly moving towards his target.

\begin{figure}[H]
  \centering
  \includegraphics[width=1\linewidth]{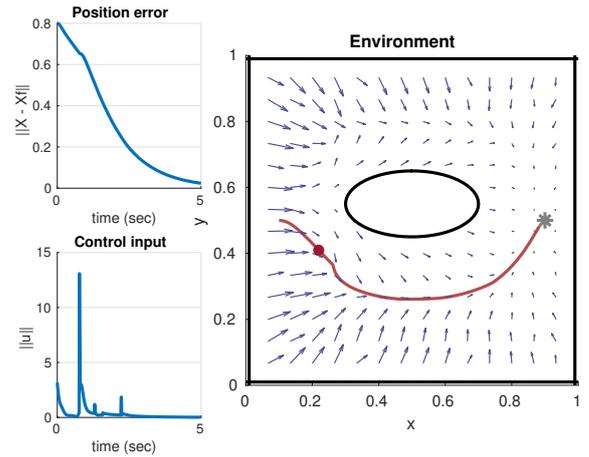}
  \caption{2D obstacle avoidance with an ellipse in a bounded environment ($\bm{P}_0 = [0.1,0.5]^\top$ and $\bm{P}_f = [0.9,0.5]^\top$).}
  \label{fig:walls}
\end{figure}

\subsection{Norm-Constrained Input Control}

We finally consider the case where the control input is constrained such that $\|\bm{u}(t)\| \in \mathcal{U}$ for all times $t\ge 0$. For the simulations shown in Fig. \ref{fig:constrainedinput}, we set the maximum norm of the control input to be $\bar{u} = \SI{1}{m/s^2}$. The black dashed lines represent the inflated boundary of the obstacles depending on the speed of the agent. It should be noted that the obstacle in the top-right quadrant is not inflated since the agent, located at $\bm{P} = [0.5,0.5]^\top$ in the figure, is not in its area of influence. We can notice that the agent is correctly reaching his target while his control input has been capped twice (the second time being when the agent enters the inflated region of the left obstacle, while still avoiding collision). 

\begin{figure}[H]
  \centering
  \includegraphics[width=1\linewidth]{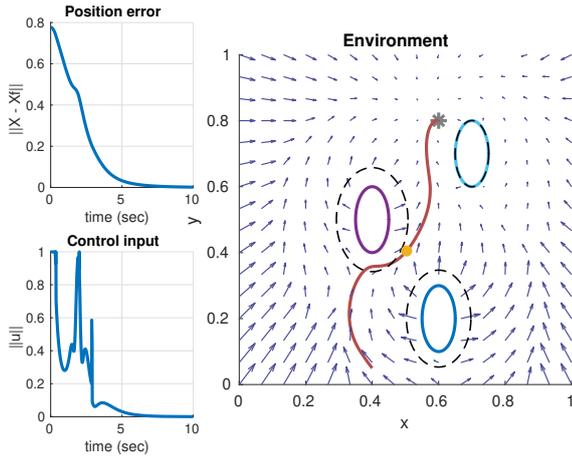}
  \caption{2D obstacle avoidance with constrained input control ($\bm{P}_0 = [-8,0]^\top$ and $\bm{P}_f = [8,0]^\top$).}
  \label{fig:constrainedinput}
\end{figure}

\section{Conclusion} \label{sec:ccl}

In summary, this paper addresses the problem of moving an agent in an environment populated with static and dynamic elliptical obstacles whose shape, position and velocity can change with time. The resulting algorithm is based on a vector field designed to provide information about the desired local agent's velocity and steer the system to a desired target under a double integrator dynamics.
Our work has been extended to include bounded environments and limited control inputs. 
%Next, this vector field has been merged with a task allocation algorithm to incorporate multi-agent settings.

Further research can be done to extend the work to obstacles of any convex shape, and then to obstacles with non-convex shapes. Moreover, additional research on vector field-based algorithms can extend this work, whose applicability is limited to agents with double integrator kinematics, to problems with more general dynamics.
%Finally, the task allocation can be improved to avoid the computation of the control cost for each agent and task, hence reducing the computation time and producing an algorithm more suitable for real-time systems.

\bibliography{main}

\end{document}